\newtheorem{conj}{Conjecture}
\definecolor{light-gray}{gray}{0.8}
\title{On Completely Reachable Automata\\
and Subset Reachability}
\titlerunning{Subset Reachability}
\author{Fran\c cois Gonze \and Rapha\"el M. Jungers \thanks{R. M. Jungers is a F.R.S.-FNRS Research Associate}
\thanks{This work was supported by the French Community of Belgium and by the IAP network DYSCO.}}
\institute{ICTEAM Institute\\
UCL, Louvain La Neuve, Belgium\\
 \email{$\{$francois.gonze,raphael.jungers$\}$@uclouvain.be}}
\begin{document}

\maketitle

\begin{abstract}{This article focuses on subset reachability in synchronizing automata. 
First, we provide families of synchronizing automata with subsets which cannot be reached with short words. These families do not fulfil Don's Conjecture about subset reachability. Moreover, they show that some subsets need exponentially long words to be reached, and that the restriction of the conjecture to included subsets also does not hold.
Second, we analyze completely reachable automata and provide a counterexample to the conjecture of Bondar and Volkov about the so-called $\Gamma_1$-graph. We finally prove an alternative version of this conjecture.}\end{abstract}

\section*{Introduction}\label{intro}

Automata\footnote{Formal definitions are provided in the next subsection.} are very useful tools in applied mathematics. In pattern recognition, they allow to parse texts and efficiently find letter sequences. In language theory, they are the basic tools defining formal languages and context free languages. In theoretical computer science, automata provide simple models for the behaviour of computing devices.
More recently, automata have also been at the core of synthesis and verification of complex automated systems. See \cite{berthe2010combinatorics, linz2011introduction} for references on the subject.

Synchronization is an important topic in automata theory. Indeed, if a machine can be modelled as a synchronizing automaton, then it is possible to fix its state by applying a sequence of commands corresponding to a synchronizing word. It has direct applications in robotics \cite{Moore56}, matrix theory \cite{JungersBlondelOlshevsky14}, consensus theory \cite{PYChev} and group theory \cite{ArCaSt15} among others. The most famous problem in this field was proposed by Jan \v Cern{\'y} in 1964 \cite{cernyPirickaRosenauerova64}. It states that if an automaton has a synchronizing word, then it also has a short synchronizing word: 

\begin{conj}[\v Cern{\'y}'s conjecture, 1964 \cite{cernyPirickaRosenauerova64}] \label{cernyconj}
Let $A= (Q;\Sigma;\delta)$ be a synchronizing automaton with $|Q|=n$.  Then, it has a synchronizing word of length at most $(n-1)^2.$\end{conj}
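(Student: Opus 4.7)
The statement is the celebrated \cerny\ conjecture, which has remained open since 1964 and is widely regarded as the central problem in the theory of synchronizing automata. I will not claim a full proof; instead I outline the standard greedy strategy, indicate where the best known approaches currently sit, and pin down the fundamental obstruction that has blocked progress.

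The plan is the classical subset-shrinking argument. First I would show that for every reachable subset $S\subseteq Q$ with $|S|\geq 2$, there exists a word $w_S\in\Sigma^*$ such that $|\delta(S,w_S)|<|S|$, together with a quantitative bound $\ell(|S|,n)$ on $|w_S|$. Concatenating these words starting from $Q$ and ending at a singleton yields a synchronizing word of length at most $\sum_{k=2}^{n} \ell(k,n)$, so the conjectured bound $(n-1)^2$ would follow if the per-step lengths averaged roughly $2(n-k)$. The classical Frankl--Pin argument produces per-step bounds of order $(n-k+2)(n-k+1)/2$, summing only to the cubic bound $(n^3-n)/6$. I would then try to sharpen this via rank-based potential functions, avoiding words, or linear-algebraic constraints on the transition matrices, in the spirit of the recent refinements by Szyku\l a and Shitov which shave a constant factor but remain cubic in $n$.

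The hard part, which has resisted every attempt since 1964, is producing a shrinking argument whose per-step length averages linearly rather than quadratically in $n$ over the entire descending chain $|S|=n,n-1,\dots,2$. Extremal families such as circular automata, one-cluster automata, and automata over a binary alphabet are known to meet the conjectured bound, but each proof exploits ad hoc structural features that do not generalize. A uniform proof appears to require a genuinely new structural insight, perhaps a global invariant tying synchronization length directly to spectral or combinatorial properties of the automaton, rather than yet another local subset-shrinking estimate; this is presumably why the present paper targets the related question of subset reachability rather than the full conjecture.
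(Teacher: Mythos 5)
This statement is \v{C}ern\'y's conjecture itself; the paper states it as a \texttt{conj} environment, offers no proof, and indeed the whole point of the surrounding text is that the conjecture remains open. There is therefore no proof in the paper to compare your proposal against, and you are right not to claim one. Your outline of the standard subset-shrinking strategy, the Frankl--Pin cubic bound $(n^3-n)/6$, and the obstruction (getting the per-step shrinking length down from quadratic to linear on average) is an accurate description of the state of the art and is consistent with how the paper frames the problem in its introduction.

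One factual correction to your survey paragraph: automata over a binary alphabet are \emph{not} a class for which the conjecture is known to hold; restricting the alphabet to two letters does not make the problem easier, and the binary case is as open as the general one. The classes that are actually settled include circular automata, one-cluster automata, Eulerian automata, and aperiodic automata, each by the kind of ad hoc structural argument you describe. With that fix, your proposal is a correct (non-)proof: it honestly identifies that no argument closing the gap is known, which matches the paper's treatment of the statement as an unproven conjecture.
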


In  \cite{cerny64}, \v Cern\'y proposes an infinite family of automata attaining this bound, for any number of states. Fig.~\ref{C4} represents the automaton of this family with four states. 
Although many improvements have been achieved recently, both for some particular classes of automata (see \cite{steinberg-2009, BBP11, Dubuc98, kari03, GonzeGGJV17, trahtman_cerny}) or by improving the best general bound \cite{Szykula2017}, Conjecture~\ref{cernyconj} is not proven yet in its general formulation. A survey on the subject was provided by Mikhail Volkov \cite{volkov_survey}.
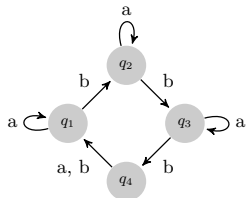
\begin{figure}[h!]
\begin{center}
\scalebox{0.8}{
\begin{tikzpicture}[->,>=stealth',shorten >=1pt,auto,node distance=1.7cm,
                    semithick]
  \tikzstyle{every state}=[fill=light-gray,draw=none,text=black, scale=0.8]

  \node[state] (A)                    {$q_1$};
  \node[state]         (B) [above right of=A] {$q_2$};
  \node[state]         (D) [below right of=A] {$q_4$};
  \node[state]         (C) [below right of=B] {$q_3$};

  \path (A) edge              node {b} (B)
            edge [loop left]  node {a} (C)
        (B) edge [loop above] node {a} (B)
            edge              node {b} (C)
        (C) edge              node {b} (D)
         	edge [loop right] node {a} (D)
        (D) edge  			  node {a, b} (A);
\end{tikzpicture}
}
\end{center}

\caption{A synchronizing automaton attaining \v Cern\'y's bound
}
\label{C4}
\end{figure}

Conjecture~\ref{cernyconj} can be adressed by the two natural and complementary approaches that are subset synchronization and subset reachability. In the subset synchronization approach, one considers the length of a shortest word which synchronizes a set $S$. It is studied in \cite{gonze2016synchronizing, DBLP:journals/corr/Vorel14, kirnasov2003glueing} among others. In the subset reachability approach, one considers the length of a shortest word reaching a set $S$. It is studied in \cite{don2016vcerny, bondar2016completely}. Subset reachability can be seen as a direct application of the extension method\footnote{In the extension method, one proves that any subset can be extended. Then, starting with a set $S$ and extending it recursively, one obtains the set $Q$, and the concatenation of the extending words is a word reaching $S$. Consequently, starting with a set composed of a single state, if each extending word is not longer than $n$, it provides with a synchronizing word of length $(n-1)^2$.}, 
which was used to prove Conjecture~\ref{cernyconj} for particular classes of automata (see \cite{steinberg-2009} for a detailed analysis of this method).

In Section~\ref{Cex}, we study subsets reachability in synchronizing automata through the angle of Don's Conjecture \cite{don2016vcerny}. First, we show that the conjecture does not hold by providing a counterexample. Second, we analyze alternative versions of the conjecture and the questions they raise.

In Section~\ref{Gamma}, we study \emph{completely reachable automata}.
An automaton is said to be completely reachable if any of its proper subsets is reachable. Completely reachable automata are studied in depth in \cite{bondar2016completely}. 
A key tool for studying such automata is the $\Gamma_1-$graph derived from the automaton. In \cite{bondar2016completely}, Bondar and Volkov conjectured that this graph is strongly connected if some assumptions are fulfilled. We will show that this conjecture does not hold, but we will prove that it is true if the assumptions are slightly modified.

Due to the space constraints, the proof of the last proposition in this work is moved to the appendix.

\subsection*{Definitions}

A deterministic finite automaton (DFA) is a triple $A= (Q;\Sigma;\delta)$ with $Q$ a set of states, $\Sigma$ an alphabet of letters and $\delta: Q \times \Sigma \to Q$ a transition function.
It is convenient to represent DFA with a directed graph in which each state is represented by a node, and each transition $\delta(q_i, l_i)=q_j$ is represented by a directed edge from $q_i$ to $q_j$ labelled by letter $l_i$. For example, Fig.\ref{C4} shows an automaton with $Q=\{q_1, q_2, q_3, q_4\}$,  $\Sigma=\{a,b\}$, $\delta(q_1, a)=q_1$, $\delta(q_1, b)=q_2$, $\delta(q_2, a)=q_2$, $\delta(q_2, b)=q_3$, $\delta(q_3, a)=q_3$, $\delta(q_3, b)=q_4$, $\delta(q_4, a)=q_1$ and $\delta(q_4, b)=q_1$.  In the following, we will often directly use a graph to describe an automaton since we can recover the formal description $(Q;\Sigma;\delta)$ of the automaton from the graph representation and vice-versa. 

In this context, we define \emph{words} of $\Sigma^*$ as sequences of letters of $\Sigma$. We call \emph{rank} of a word $w$ the cardinality of its image $|Qw|$. Transition functions are recursively extended to words as follows: for a word $w=l_iw'$ in $\Sigma^*$, with $l_i$ a letter and $w'$ a word, $\delta(q_i, w)=\delta(\delta(q_i, l_i), w')$. Similarly, transition functions are extended to sets of states as follows: for a set $S\in Q$ and a word $w\in\Sigma^*$, $\delta(S, w)=\{q_j | q_j=\delta(q_i, w), q_i \in S\}$. In order to simplify the notation, we write $q_iw$ for $\delta(q_i, w)$ and $Sw$ for $\delta(S, w)$.

We define the \emph{power automaton} of an automaton $A=(Q;\Sigma;\delta)$ as the automaton obtained by setting all the possible subsets of $Q$ as states, maintaining the same letters as $A$ and the transition function equal to the transition function of $A$ on its sets of states. The \emph{square graph} of an automaton is equal to the restriction of the power automaton to pairs. A set $S\in Q$ is called \emph{reachable} if there exists a word $w\in \Sigma^*$ such that $Qw=S$. This is equivalent to having a path from $Q$ to $S$ in the power automaton. A set $S\in Q$ is called \emph{extendable} by a word $w$ if $S=S'w$ for $S'\subseteq Q$ with $|S'|>|S|$. The \emph{diameter} of a graph is the maximal length of the shortest path between two of its states. We say that states $q_i$ and $q_j$ define the diameter if the shortest path from $q_i$ to $q_j$ has a length equal to the diameter.



An automaton is \emph{sychronizing} if it has a reachable set of size 1. A word $w$ such that $|Qw|=1$ is said to be a \emph{synchronizing word}. 
For example, the automaton in Fig.\ref{C4} is synchronizing as the word $w=abbbabba$ is such that $Qw=q_1$, and therefore $w$ is a synchronizing word.






For each word $w\in \Sigma^*$ of rank $n-1$, we name the state $Q\backslash Qw$ as $excl(w)$, and the state $q$ such that there exists two states $q_1$ and $q_2$ with $q_1w=q_2w=q$ as $dupl(w)$.

For an automaton $\mathcal{A}= (Q;\Sigma;\delta)$, we define the directed graph $\Gamma_1(\mathcal{A})=(N,V)$, with node set $N=Q$ and edge set
$V=\{(q_i, q_j)| q_i=excl(w), q_j=dupl(w), \text{ for some w }\in \Sigma^*\text{ and }|Qw|=|Q|-1\}$.

For example, on the left of Fig.~\ref{Reachable} we present a completely reachable automaton in which every set $S\subset Q$ is reachable with words composed of subwords of rank $n-1$. Its corresponding $\Gamma_1-$graph is represented on the right of Fig.~\ref{Reachable}, with the edges labelled by a word inducing them, for the sake of clarity. We notice that, in this case, $\Gamma_1$ is strongly connected.

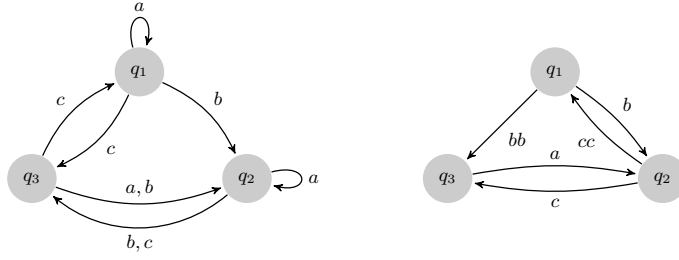
\begin{figure}[ht]
\begin{center}
\scalebox{0.80}{
\begin{tikzpicture}[->,>=stealth',shorten >=1pt,auto,node distance=2.5cm,
                    semithick]
  \tikzstyle{every state}=[fill=light-gray,draw=none,text=black, scale=1]

   \node[state] (A)                    {$q_{1}$};
  \node[state]         (B) [below right of=A] {$q_{2}$};
  \node[state]         (C) [ below left of =A] {$q_{3}$};
  
  \node[state] (A2)      [right=6cm of A]              {$q_{1}$};
  \node[state]         (B2) [below right of=A2] {$q_{2}$};
  \node[state]         (C2) [ below left of =A2] {$q_{3}$};

  \path (A) edge        [loop above]      node  {$a$} (B)
  			 edge      [bend left=20]       node {$b$} (B)
  			 edge      [bend left=20]       node {$c$} (C)
        (B) 
        	edge       [loop right]       node {$a$} (B)
        	edge         [bend left=40]    node  {$b,c$} (C)
        (C) edge     [bend left=20]         node {$ c$} (A)
        	edge     [bend right=20]         node {$a,b$} (B)
        (A2) 
  			 edge       [bend left=10]     node {$b$} (B2)
  			 edge            node {$bb$} (C2)
        (B2) 
        	edge         [bend left=10]    node  {$c$} (C2)
        	 edge       [bend left=10]     node {$cc$} (A2)
        (C2)
        	edge     [bend left=10]         node {$a$} (B2)
        ;

\end{tikzpicture}
}
\end{center}
\caption{A completely reachable automaton and the $\Gamma_1$ graph associated}
\label{Reachable}
\end{figure}



\section{Subset Reachability}\label{Cex}

In this section, we provide an answer to Don's conjecture on subset reachability and analyze questions arising from it.
\begin{conj}[Conjecture 2 in \cite{don2016vcerny}]
\label{Don}
Let $A= (Q;\Sigma;\delta)$ be an $n-$state automaton.  If $S\subset Q$ is a set of size $k$ and there exists a word $w$ such that $\delta(Q;w)=S$, then there exists a word with this property of length at most $n(n-k)$.
\end{conj}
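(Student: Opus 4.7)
Since the authors announce that Conjecture~\ref{Don} is false, the plan is to refute it by exhibiting an explicit counterexample and then, more ambitiously, a family demonstrating that the true worst case is exponential rather than $n(n-k)$.

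The first step is to find a small synchronizing automaton with a reachable subset $S$ of size $k$ whose shortest reaching word exceeds $n(n-k)$. The most promising regime is $k$ close to $n$, say $k=n-1$, where the claimed bound $n(n-k)=n$ is tiny and easy to beat. I would scan candidates by enumerating small automata (variants of Fig.~\ref{C4} and short parallel compositions of cycles) and running a breadth-first search in the power automaton to measure the exact distance from $Q$ to each $(n-1)$-element subset; the first synchronizing candidate for which this distance exceeds $n$ supplies a counterexample, which is then verified by hand by listing a reaching word of that length and exhibiting the BFS layer structure.

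For the stronger claim that reaching a subset can require exponentially long words, my plan is to construct an infinite family $A_n$ whose transition monoid factors through disjoint cyclic gadgets of pairwise coprime lengths $p_1,\dots,p_m$ chosen so that $p_1+\dots+p_m=O(n)$ while $\mathrm{lcm}(p_1,\dots,p_m)=e^{\Theta(n)}$ (taking the $p_i$ to be the first primes and invoking the prime number theorem). The target subset $S$ is designed so that the gadgets must be simultaneously aligned, forcing any reaching word $w$ to satisfy $|w|\equiv r_i \pmod{p_i}$ for fixed residues $r_i$, so by the Chinese remainder theorem $|w|\ge\mathrm{lcm}(p_i)$, which dwarfs the quadratic bound $n(n-k)$. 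Reachability itself is easy: the alignment argument also furnishes an explicit witness.

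The main obstacle, especially in the exponential family, is ruling out shortcuts in the power automaton: even if the cyclic structure suggests that reaching $S$ requires unwinding all cycles in lock-step, interactions between letters on arbitrary subsets can generate unintended paths from $Q$ to $S$. The engineering challenge is to pick the letters so that the only transitions between "aligned" configurations in the power automaton are the obvious cyclic ones, while still keeping enough collapsing letters for the automaton to remain synchronizing (as required by the conjecture's hypothesis). This verification amounts to a careful combinatorial analysis of the orbit of $Q$ in the power automaton and is where most of the design effort will go.
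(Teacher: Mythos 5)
Your opening move---hunting for a counterexample with $k=n-1$, on the grounds that the bound $n(n-k)=n$ is ``tiny and easy to beat''---cannot succeed. By Lemma~\ref{shortestword}, along a shortest word reaching $S$ all intermediate subsets are pairwise distinct and have cardinality at least $|S|$; for $|S|=n-1$ there are at most $n+1$ subsets of cardinality at least $n-1$ (namely $Q$ and the $n$ subsets of size $n-1$), so every reachable $(n-1)$-set is reached by a word of length at most $n$, and the conjecture holds in exactly the regime you target. The paper therefore works at $k=n-2$: it takes two permutation letters whose square graph has diameter $\Theta(n^2)$ (the family of \cite{GonzeGGJV17}) and adds a single rank-$(n-2)$ letter $c$ whose excluded pair is one endpoint of that diameter. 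Since permutations fix $Q$ and applying $c$ to any $(n-2)$-set either returns $Q c$ or drops the cardinality, $c$ is forced to be the first letter and the rest of the word must trace a shortest path between pairs in the square graph; this yields $\mathcal{P}_{2,n}$ with a shortest reaching word of length $n^2/4+5n/4-6>2n$ (Proposition~\ref{P2n}). Some structural argument of this kind is unavoidable; breadth-first search over ``small automata'' gives no guarantee of ever finding an instance, and indeed would find none of size $n-1$.

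Your exponential family has two further problems. The congruence $|w|\equiv r_i\pmod{p_i}$ constrains only the \emph{length} of $w$, which determines the alignment of the gadgets only if every letter acts as the same rotation on each cycle; once several distinct letters (including the collapsing letters you need so that a proper subset is reachable at all) are available, word length no longer pins down the permutation applied, and the Chinese-remainder lower bound does not follow without a substantial additional forcing argument. Moreover, even granting it, coprime cycle lengths summing to $O(n)$ have least common multiple $e^{\Theta(\sqrt{n\log n})}$ (Landau's function), which is superpolynomial but not $e^{\Theta(n)}$ as you claim. The paper's $\mathcal{P}_{3,n}$ avoids both issues: it enumerates all $L=C^{n-1}_{\lfloor\frac{n}{2}\rfloor}$ subsets of size $\lfloor\frac{n}{2}\rfloor$ of $Q\setminus\{q_1\}$ and assigns to each $S_i$ a dedicated letter $l_i$, the unique letter sending no state of $S_i$ to $q_1$; since only $a$ moves $q_1$ away from itself, any word reaching $S_L$ is forced, subset by subset, to end with $a\,l_1\cdots l_{L-1}$, giving a genuine $2^{\Theta(n)}$ lower bound (Proposition~\ref{P3n}).
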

This conjecture would imply that Conjecture~\ref{cernyconj} is also true \cite{don2016vcerny}.

In order to analyze the conjecture, we need the following lemma:

\begin{lemma}
\label{shortestword}
Let $\mathcal{A}= (Q;\Sigma;\delta)$ be an automaton. Let $S\subset Q$ be a reachable subset, and $w\in \Sigma^*$ be a shortest word such that $Qw=S$. Let us write $w=l_1\dots l_k$, with $k$ the length of $w$ and $l_i\in \Sigma$ the letters of $w$, with $0<i\leq k$. Let the subsets $S_i=Ql_1\dots l_i$ with $0<i<k$ be the subsets reached by prefixes of $w$. 

Then, all the subsets $S_i$ are different, and they all have a cardinality larger or equal to the cardinality of $S$. 
\end{lemma}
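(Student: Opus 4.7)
The plan is to exploit two standard observations about applying letters to subsets: monotonicity of cardinality and minimality of the chosen word.

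First, for the cardinality claim, I would observe that for any subset $T \subseteq Q$ and any letter $l$, the set $Tl = \delta(T, l)$ has cardinality at most $|T|$, since $\delta(\cdot, l)$ is a function and may only identify states. Applying this inductively along $w$, the sequence $|S_0| \geq |S_1| \geq \dots \geq |S_k|$ is non-increasing, with $S_0 = Q$ and $S_k = S$. Hence for every $0 < i < k$ we have $|S_i| \geq |S_k| = |S|$, which is the cardinality part of the statement.

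Second, for distinctness, I would argue by contradiction using minimality. Suppose $S_i = S_j$ for some $0 \leq i < j \leq k$. Consider the shorter word $w' = l_1 \dots l_i l_{j+1} \dots l_k$, of length $k - (j - i) < k$. Then
\[
Qw' \;=\; S_i\, l_{j+1}\dots l_k \;=\; S_j\, l_{j+1}\dots l_k \;=\; S_k \;=\; S,
\]
so $w'$ also reaches $S$, contradicting the assumption that $w$ is a shortest such word. This covers the distinctness among the $S_i$ (and incidentally shows that none of them equals $Q$ or $S$ either, since those correspond to the boundary indices $0$ and $k$).

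There is no real obstacle in this argument; it is a routine minimality/monotonicity argument, and the only thing to be careful about is keeping the index ranges consistent with the statement's convention that $0 < i < k$, while allowing the comparison to also involve the endpoints $S_0$ and $S_k$ in the contradiction step.
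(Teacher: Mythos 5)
Your proposal is correct and follows essentially the same argument as the paper: the cardinality claim via monotonicity of $|T|\mapsto|Tl|$ under a deterministic transition function, and distinctness via the cut-and-paste contradiction $l_1\dots l_i l_{j+1}\dots l_k$ with minimality of $w$. No substantive difference.
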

\begin{proof}
If two subsets $S_i=Ql_1\dots l_i$ and $S_j=Ql_1\dots l_j$, with $i<j$ are equal, then the word $l_1\dots l_i l_{j+1} \dots l_k$, which is shorter than $w$, also reaches $S$, which is a contradiction.

In a complete deterministic automaton, each state has exactly one image. Therefore for any word $w$ and and set $S$, we have $|Sw|\leq |S|$.
\end{proof}

This lemma implies that any reachable set of size $n-1$ can be reached with a word of length $n$. Indeed, there are only $n$ different subsets of cardinality higher or equal to the one of this set. Therefore, we will search for a counterexample to Conjecture~\ref{Don} with a set of size $n-2$ which cannot be reached with a word of length lower or equal to $2n$.
 
One way to build such automata is to select the letters in the following way. First choose a set of permutation letters such that the square graph of the automaton limited to these letters has a large diameter $D$. Second, add a letter $l$ of rank $n-2$ such that $Q\backslash Ql$ is the first pair defining the diameter of the square graph of the permutation letters. With this construction, the set containing all the states except the second pair defining the diameter cannot be reached by words shorter than $D+1$. 

In Fig.\ref{NewSet}, we present the automaton $\mathcal{P}_{2,n}$, with $n$ congruent to 3 modulo 4, which is built in that way. 

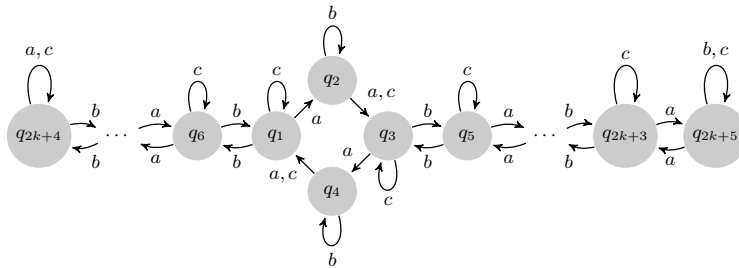
\begin{figure}[ht]
\begin{center}
\scalebox{0.80}{
\begin{tikzpicture}[->,>=stealth',shorten >=1pt,auto,node distance=1.3cm,
                    semithick]
  \tikzstyle{every state}=[fill=light-gray,draw=none,text=black, scale=1]

   \node[state] (A)                    {$q_{1}$};
  \node[state]         (B) [above right of=A] {$q_{2}$};
  \node[state]         (C) [ below right of =B] {$q_{3}$};
  \node[state]         (D) [below left of =C] {$q_{4}$};  \node[state]         (E) [right of =C] {$q_{5}$};
   \node[state]         (G) [left of =A] {$q_{6}$};

       \node         (K) [right of =E] {$\cdots$};
       \node         (L) [left of =G] {$\cdots$};

       \node[state]         (F) [right of =K] {$q_{2k+3}$};
       \node[state]         (H) [right=0.4cm of F] {$q_{2k+5}$};
       \node[state]         (J) [left of =L] {$q_{2k+4}$};

  \path (A) edge              node [swap] {$a$} (B)
  			 edge      [bend left=20]       node {$b$} (G)
  			 edge       [loop above]       node {$c$} (B)
        (B) edge             node {$a, c$} (C)
        	edge       [loop above]       node {$b$} (B)
        (C) edge             node [swap] {$a$} (D)
        	edge     [bend left=20]         node {$b$} (E)
        	edge       [loop below]       node {$c$} (B)
        (D) edge  	         node  {$a,c$} (A)
       		edge    [loop below]          node {$b$} (A)
			
        (E) edge    [bend left=20]       node {$b$} (C)
        edge     [bend left=20]     node {$a$} (K)
        edge       [loop above]       node {$c$} (B)
        (F)edge    [bend left=20]       node {$b$} (K)
        edge      [bend left=20]      node {$a$} (H)
        edge       [loop above]       node {$c$} (B)
        (G)edge    [bend left=20]       node {$b$} (A)
        edge      [bend left=20]      node {$a$} (L)
        edge       [loop above]       node {$c$} (B)
        (H)edge    [bend left=20]       node {$a$} (F)
        edge     [loop above]      node {$b,c$} (E)
        (K)edge    [bend left=20]       node {$b$} (F)
        edge     [bend left=20]      node {$a$} (E)
        
        (L)edge    [bend left=20]       node {$b$} (J)
        edge     [bend left=20]      node {$a$} (G)
        (J)edge    [bend left=20]       node {$b$} (L)
        edge       [loop above]       node {$a,c$} (B);

\end{tikzpicture}
}
\end{center}
\caption{The automaton $\mathcal{P}_{2,n}$}
\label{NewSet}
\end{figure} 

\begin{proposition}
\label{P2n}
The shortest word reaching the set $Q\backslash \{q_{k+2},q_{k+4}\}$ in the automaton in $\mathcal{P}_{2,n}$ is of length $n^2/4+5n/4-6$.
\end{proposition}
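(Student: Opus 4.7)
The plan is to prove the claimed length $n^2/4 + 5n/4 - 6$ by matching upper and lower bounds. As a setup, in $\mathcal{P}_{2,n}$ the letters $a$ and $b$ act as permutations of $Q$, while $c$ is the unique non-permutation letter, with $Qc = Q\setminus\{q_2,q_4\}$. Since permutations preserve set size, any word $w$ with $|Qw|=n-2$ must contain at least one occurrence of $c$; and if $w$ uses $c$ exactly once, writing $w = \pi_1 c \pi_2$ with $\pi_1,\pi_2\in\{a,b\}^*$, we have $Qw = \pi_2(Q\setminus\{q_2,q_4\})$, so that reaching $Q\setminus\{q_{k+2},q_{k+4}\}$ reduces to finding a word $\pi_2\in\{a,b\}^*$ whose associated permutation sends the pair $\{q_2,q_4\}$ onto $\{q_{k+2},q_{k+4}\}$ setwise.

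For the upper bound, I would exhibit an explicit word $\pi^\star\in\{a,b\}^*$ of length $n^2/4+5n/4-7$ that maps $\{q_2,q_4\}$ setwise to $\{q_{k+2},q_{k+4}\}$; then $w^\star=c\,\pi^\star$ has length $n^2/4+5n/4-6$ and reaches the target. The construction exploits the two-chain structure of $\mathcal{P}_{2,n}$ visible in Fig.~\ref{NewSet}: the letter $a$ cycles the block $(q_1,q_2,q_3,q_4)$ and shifts states along the odd and even chains $q_5,q_7,\ldots,q_{2k+5}$ and $q_6,q_8,\ldots,q_{2k+4}$, while $b$ implements controlled swaps between the chains. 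Building $\pi^\star$ amounts to moving each element of the pair by $\Theta(n)$ positions while keeping them coordinated in phase, and the interleaving required to do this simultaneously yields the quadratic length $n^2/4+5n/4-7$.

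For the lower bound, let $w$ be any word with $Qw=Q\setminus\{q_{k+2},q_{k+4}\}$ and consider the last occurrence of $c$ in $w$: write $w=w_1 c w_2$ with $w_2\in\{a,b\}^*$. Since $w_2$ acts as a permutation, $|Qw_1 c|=n-2$, and $w_2$ sends the pair $Q\setminus Qw_1 c$ onto $\{q_{k+2},q_{k+4}\}$. In the main case $w_1\in\{a,b\}^*$, we have $Qw_1=Q$ so $Qw_1 c=Qc=Q\setminus\{q_2,q_4\}$, and therefore $|w_2|$ is at least the distance $d$ in the square graph of $\{a,b\}$ from $\{q_2,q_4\}$ to $\{q_{k+2},q_{k+4}\}$; by design of $\mathcal{P}_{2,n}$, this distance equals $n^2/4+5n/4-7$, giving $|w|\geq 1+d$. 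When $w_1$ already contains a $c$, I would use the explicit collision structure of $c$ (the pairs $\{q_1,q_4\}$ and $\{q_2,q_3\}$ are the only ones collapsed) to check case by case that the pair removed by $w_1 c$ is never strictly closer in the square graph to $\{q_{k+2},q_{k+4}\}$ than $\{q_2,q_4\}$, so that any time saved in $w_2$ is offset by the extra letters inside $w_1$.

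The main obstacle is the exact evaluation of $d = n^2/4+5n/4-7$. The matching upper bound on $d$ is witnessed by $\pi^\star$, so the real work is the lower bound. I would attack it by defining a potential on pairs that records, for each element, its position along its chain together with the phase in the 4-cycle on $q_1,q_2,q_3,q_4$, then show that each application of $a$ or $b$ changes this potential by at most a bounded amount in restricted directions, forcing $\Omega(n^2)$ steps in total. The exact constant $n^2/4+5n/4-7$ should then come from counting precisely how many chain-switches (applications of $b$) are needed to bring both coordinates simultaneously to the target positions, and how many $a$-shifts must occur between consecutive switches. A secondary, more routine difficulty is the detailed verification of the multi-$c$ sub-case in the lower bound argument.
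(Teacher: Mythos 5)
Your reduction is essentially the paper's: strip off the occurrence of $c$ and reduce the problem to the distance from $\{q_2,q_4\}$ to $\{q_{k+2},q_{k+4}\}$ in the square graph of the permutation letters $a,b$. Two remarks. First, your handling of the position of $c$ is more complicated than necessary. The paper argues via Lemma~\ref{shortestword} that $c$ must be the \emph{first} letter of any shortest word: since $a,b$ are permutations, a nonempty permutation prefix would repeat the subset $Q$, and a later occurrence of $c$ applied to a set of size $n-2$ lands inside $Qc=Q\setminus\{q_2,q_4\}$, hence either drops the cardinality below $n-2$ or repeats the subset $Q\setminus\{q_2,q_4\}$ already reached after the first letter. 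In particular, your proposed case analysis on ``which pair is removed by $w_1c$'' is vacuous: whenever $|Qw_1c|=n-2$ one has $Q\setminus Qw_1c=\{q_2,q_4\}$ exactly, because $Qw_1c\subseteq Qc$. Second, and more importantly, the entire quantitative content of the proposition is the exact value $d=n^2/4+5n/4-7$ of that square-graph distance, and your proposal leaves this as an acknowledged obstacle with only a sketched potential-function strategy and an unexhibited witness word $\pi^\star$. The paper does not prove this value either: it imports it from \cite{GonzeGGJV17}, where the permutation pair $(a,b)$ is constructed precisely so that the shortest path from $q_2q_4$ to $q_{k+2}q_{k+4}$ in the square graph has this length. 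So your plan is complete only modulo that citation, or modulo a self-contained proof of the distance, which your sketch does not yet supply.
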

\begin{Long}
\begin{proof}
We first notice that letters $a$ and $b$ of the automaton $\mathcal{P}_{2,n}$ are a set of permutation letters studied in \cite{GonzeGGJV17} such that the shortest path from the pair $q_2q_4$ to $q_{k+2}q_{k+4}$ is of length $n^2/4+5n/4-7$ (see \cite{GonzeGGJV17}). As $a$ and $b$ are permutations, we have $Qa=Q$ and $Qb=Q$, so the first letter of any shortest word reaching a set $S$ should be $c$.

Second, we notice that letter $c$ sends $q_2$ to $q_3$ and $q_4$ to $q_1$, and the other states to themselves, so $Qc=Q\setminus\{q_2, q_4\}$ and is of cardinality $n-2$. Moreover, if letter $c$ is applied to any set containing $n-2$ states, it either sends it to $Q\setminus\{q_2, q_4\}$, or to a set of lower cardinality. Therefore, due to Lemma~\ref{shortestword}, $c$ can only be at the first position in a shortest word reaching $Q\setminus\{q_2, q_4\} $.

Third, applying letters $a$ and $b$ to a set containing $n-2$ states is equivalent to applying it to the complementary set, i.e. to pairs. Therefore, the shortest word $w$ mapping the set $Q\setminus\{q_2, q_4\} $ to $Q\setminus\{q_{k+2}, q_{k+4}\} $ is also the shortest word mapping $q_2q_4$ to $q_{k+2}q_{k+4}$, which is of length $n^2/4+5n/4-7$. Therefore, $cw$ is the shortest word reaching $Q\setminus\{q_{k+2}, q_{k+4}\} $, and is of length $n^2/4+5n/4-6$. 
\end{proof}
\end{Long}

The automaton of $\mathcal{P}_{2,n}$ is a generic counterexample to Conjecture~\ref{Don}, which claims that the distance from $Q$ to any set of size $n-2$ should be $2n$.

We notice that the reachable sets of $\mathcal{P}_{2,n}$ can be reached with words of polynomial length. Therefore, one can wonder if it is a general property:  
\begin{problem}
\label{PolConjConj}
Let $A= (Q;\Sigma;\delta)$ be an $n-$state automaton.  If $S\in Q$ is a set of size $k$ and there exists a word $w$ such that $Qw =S$, does there exist a word with this property of polynomial length with respect to $k$?
\end{problem}

\subsubsection{Answer to Problem~\ref{PolConjConj}: Exponential Length Shortest Words}
For any $n$, we can build an automaton such that it has a set $S\in Q$ containing $\lfloor \frac{n}{2}\rfloor$ states which cannot be reached with a word shorter than $L=C^{n-1}_{\lfloor \frac{n}{2}\rfloor}$, which is an exponential quantity with respect to $n$. This will be shown in Definition~\ref{P3} and Proposition~\ref{P3n} below.

\begin{definition}
\label{P3}
We define the automaton $\mathcal{P}_{3,n}$ as follows. 
It has $n$ states $q_1\dots q_n$, a letter $a$ and letters $l_i$ that we will define later. The letter $a$ is defined as follows:

$q_1 a=q_2$

$q_i a=q_i$ for $1<i\leq \lfloor \frac{n}{2}\rfloor+1$

$q_i a=q_2$ for $\lfloor \frac{n}{2}\rfloor+1< i \leq n.$

In order to define the other letters, we first list all the sets composed of $\lfloor\frac{n}{2}\rfloor$ states among states $q_2, \dots, q_n$. Namely we take $S_1$ the set composed of $q_2, \dots, q_{\lfloor \frac{n}{2}\rfloor+1}$ and we list all the other possible sets as $ S_2, \dots, S_{L}$, with $S_{L}=\{q_{n-\lfloor \frac{n}{2}\rfloor+1},\dots,q_n\}$. We notice that there are exactly $L= C^{n-1}_{\lfloor \frac{n}{2}\rfloor}$ such sets. In each set $S_i$, we order the elements according to their indices.

We now define letters $l_1, \dots, l_{L-1}$. First, we have $q_1 l_i=q_1$ for $1\leq i\leq L-1$.
Then, the effect of $l_i$ on the other states is defined as follows: $ q_j l_i=q_1$ if $q_j \notin S_i$ and $ q_j l_i=q_k$ if $q_j \notin S_i$, with $q_j$ the $m$th element of $S_i$ and $q_k$ the $m$th element of $S_{i+1}$. Therefore, each of the letters $l_i$ injects the elements of $S_i$ in $S_{i+1}$ and sends all the elements not in $S_i$ to $q_1$.
\end{definition}

In Fig.\ref{ExpCount}, we show a representation of $\mathcal{P}_{3,n}$ (we do not represent the effect of $l_1, \dots, l_L$ on other states than $q_1$ for the sake of clarity).

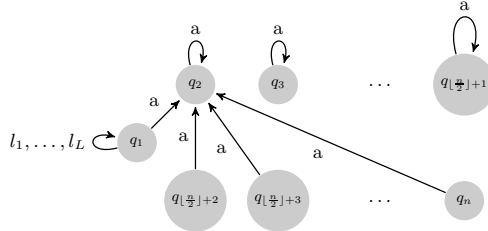
\begin{figure}[h!]
\begin{center}
\scalebox{0.8}{
\begin{tikzpicture}[->,>=stealth',shorten >=1pt,auto,node distance=1.7cm,
                    semithick]
  \tikzstyle{every state}=[fill=light-gray,draw=none,text=black, scale=0.8]

  \node[state] (A)                    {$q_1$};
  \node[state]         (B) [above right of=A] {$q_2$};
  \node[state]         (C) [right  of=B] {$q_3$};
  \node         (D) [right  of=C] {$\dots$};
  \node  [state]       (E) [right  of=D] {$q_{\lfloor \frac{n}{2}\rfloor+1}$};
  \node[state]         (F) [below right of=A] {$q_{\lfloor \frac{n}{2}\rfloor+2}$};
  \node[state]         (G) [ right of=F] {$q_{\lfloor \frac{n}{2}\rfloor+3}$};
  \node         (H) [right  of=G] {$\dots$};
  \node  [state]       (I) [right  of=H] {$q_{n}$};

  \path (A) edge              node {a} (B)
  			edge      [loop left]        node {$l_1, \dots, l_L$} (B)
   (B) edge    [loop above]          node {a} (B)
  (C) edge      [loop above]        node {a} (B)
 (E) edge      [loop above]        node {a} (B)
  (F) edge              node {a} (B)
   (G) edge              node {a} (B)
   (I) edge              node {a} (B)
            ;
\end{tikzpicture}
}
\end{center}
\caption{Part of the automaton $\mathcal{P}_{3,n}$}
\label{ExpCount}
\end{figure}

\begin{proposition}
The shortest word reaching the set $\{q_{n-\lfloor \frac{n}{2}\rfloor+1},\dots,q_n\}$ in the automaton $\mathcal{P}_{3,n}$ is $C^{n-1}_{\lfloor \frac{n}{2}\rfloor}$ letters long.
\label{P3n}
\end{proposition}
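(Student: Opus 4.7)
The plan is a two-sided argument: a direct upper bound and an inductive lower bound.

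For the upper bound, I would exhibit the explicit word $w = a\, l_1\, l_2 \cdots l_{L-1}$, which has length $L = C^{n-1}_{\lfloor n/2 \rfloor}$. By inspection of the definition of $a$, we have $Qa = S_1$: the state $q_1$ and every state above $q_{\lfloor n/2\rfloor + 1}$ are sent to $q_2$, while the states of $S_1$ are fixed. Each $l_j$ then acts on $S_j$ as the order-preserving bijection onto $S_{j+1}$ by construction, so $S_j l_j = S_{j+1}$. Iterating gives $Qw = S_L$.

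For the lower bound I would prove by induction on $i \in \{1,\dots,L\}$ that every word $u$ with $Qu = S_i$ has length at least $i$. The base case $i=1$ is immediate since $Q \neq S_1$. The inductive step rests on a uniqueness claim: for $i \geq 2$, the only letter $\ell$ and set $T$ with $T\ell = S_i$ are $\ell = l_{i-1}$ and $T = S_{i-1}$. Granting this, any word reaching $S_i$ has the form $u' l_{i-1}$ with $Qu' = S_{i-1}$, and by the induction hypothesis $|u'| \geq i-1$, so $|u| \geq i$. Applied at $i = L$, this yields the matching lower bound.

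The remaining task is to establish the uniqueness claim, which is really the crux of the argument but follows mechanically from the construction. The key observation is that every $S_j$ lies in $\{q_2,\dots,q_n\}$, so in particular $q_1 \notin S_i$. Now $Qa \subseteq S_1$, so $a$ cannot be the last letter for $i \geq 2$. For $l_j$, any state outside $S_j \cup \{q_1\}$ is sent to $q_1$, and $q_1$ is fixed; hence $Tl_j = S_i$ forces $q_1 \notin T$ and $T \subseteq S_j$. Since $l_j$ is a bijection $S_j \to S_{j+1}$, this gives $Tl_j \subseteq S_{j+1}$, so $S_i \subseteq S_{j+1}$, and equality of cardinality forces $j = i-1$ and $T = S_{i-1}$. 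I do not foresee any subtle obstacle: the automaton $\mathcal{P}_{3,n}$ has been engineered so that, restricted to subsets of size $\lfloor n/2 \rfloor$ missing $q_1$, the power automaton is a simple path $S_1 \to S_2 \to \cdots \to S_L$, and the lower bound simply reads off the length of this path.
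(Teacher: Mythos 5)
Your proof is correct and takes essentially the same route as the paper's: both hinge on the fact that the only letter that does not send some state of a size-$\lfloor n/2\rfloor$ subset $S_i\subseteq\{q_2,\dots,q_n\}$ to $q_1$ is $l_i$, so the power automaton is a forced path $S_1\to S_2\to\cdots\to S_L$. The only organizational difference is that you run a backward induction from $S_L$ via a uniqueness-of-predecessor claim, whereas the paper locates the last occurrence of $a$, shows the set reached there must be $S_1$, and argues forward; both give the same matching lower bound.
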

\begin{Long}
\begin{proof}

From the definition of letters $l_i$, it is clear that the word $al_1l_2\dots l_{L-1}$ maps $Q$ on $S_L$. 

Moreover, it is the shortest of such words. Indeed, let $w$ be a word such that $Qw=S_L$. We first notice that the only letter which maps $q_1$ on another state is $a$. Therefore, as $Qw$ does not include $q_1$, it must contain a last letter $a$ and letters after it which do not map any state of the transition sets on $q_1$. 
Now, if we consider a set $S_i$ of $\lfloor \frac{n}{2}\rfloor$ states from $q_2, \dots, q_n$, the only letter $l$ which does not map any states of $S_i$ on $q_1$ is $l_i$. 
Therefore, if a subset $S_i$ was reached with a prefix of $w$ containing the last $a$ of $w$, the only possible letter $l$ such that $q_1\notin S_il$ is $l_i$. This implies that the letter following the prefix is $l_i$ and that the next subset is $S_il=S_{i+1}$. Since $Qa=\{q_2, \dots, q_{\lfloor \frac{n}{2}\rfloor+1}\}$ is already of cardinality $\lfloor \frac{n}{2}\rfloor$, the first subset after the last $a$ must be $\{q_2, \dots, q_{\lfloor \frac{n}{2}\rfloor+1}\}=S_1$. Then, by induction, the letters following the last $a$ in $w$ are $l_1, \dots, l_{L-1}$, in that order. Therefore, $w$ has a suffix $al_1l_2\dots l_{L-1}$, which is $C^{n-1}_{\lfloor \frac{n}{2}\rfloor}$ letters long. Since this suffix is a word reaching $S_L$, it is also the shortest word reaching $S_L$.
\end{proof}
\end{Long}


We notice that the families of automata presented were such that a first letter was used to lower the cardinality of the set, and then only permutations were used. Applying the first letter again would have reduced the number of states in the set below the cardinality of $S$ or reach a set already reached, which is forbidden by Lemma~\ref{shortestword}.
Moreover, it turns out that these automata have a set $S$ which cannot be reached with a short word, but such that some subset of $S$ can be reached with such word. This leads to the following question:
\begin{problem}
\label{WeakerConj}
Let $A= (Q;\Sigma;\delta)$ be an $n-$state automaton.  If $S\in Q$ is a set of size $k$ and there exists a word $w$ such that $Qw\subseteq S$, does there exist a word with this property of length at most $n(n-k)$?
\end{problem}

\subsubsection{Answer to Problem~\ref{WeakerConj}: Included subsets}
We notice that, although this question is weaker than Conjecture~\ref{Don}, a positive answer would still imply that Cerny's conjecture is true.

However, even for this weaker version, the answer is negative. Indeed, the automaton $\mathcal{P}_4$ in Fig.~\ref{EmptyState}, which was introduced in \cite{GJT2014}, is such that the set $S=\{1, 2 ,3\}$ or any subset of $S$ cannot be reached with a word of length lower than 6. This can be verified by breadth first search algorithm on the power automaton, presented in Fig.\ref{EmptyStatePow}.

\begin{figure}[h!]
\begin{center}
\scalebox{0.8}{
\begin{tikzpicture}[->,>=stealth',shorten >=1pt,auto,node distance=1.7cm,
                    semithick]
  \tikzstyle{every state}=[fill=light-gray,draw=none,text=black, scale=0.9]

  \node[state] (A)                    {$0$};
  \node[state]         (B) [above right of=A] {$1$};
  \node[state]         (D) [below right of=A] {$3$};
  \node[state]         (C) [below right of=B] {$2$};

  \path (A) edge              	node {a} (B)
            edge [loop left]  	node {b} (C)
        (B) edge   				node {} (A)
            edge              	node {b} (C)
        (C) edge              	node {b} (D)
         	edge [loop right] 	node {a} (D)
        (D) edge  			  	node {b} (B)
        	edge  			  	node {a} (A);
\end{tikzpicture}}
\end{center}
\caption{The automaton $\mathcal{P}_4$}
\label{EmptyState}
\end{figure}
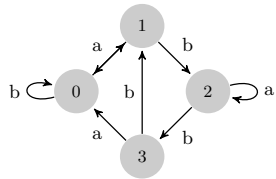

\begin{figure}[h!]
\begin{center}
\scalebox{0.8}{
\begin{tikzpicture}[->,>=stealth',shorten >=1pt,auto,node distance=1.7cm,
                    semithick]
  \tikzstyle{every state}=[fill=light-gray,draw=none,text=black, scale=0.9]

  \node[state] (A)                    {$0$};
  \node[state]         (B) [above right of=A] {$1$};
  \node[state]         (D) [below right of=A] {$3$};
  \node[state]         (C) [below right of=B] {$2$};
  \node[state] (E)          [below left of=A]          {$13$};
  \node[state] (F)          [below  of=E]          {$23$};
  \node[state] (G)          [left  of=F]          {$12$};
  \node[state] (H)          [above  of=G]          {$03$};
  \node[state] (I)          [left  of=H]          {$02$};
  \node[state] (J)          [above  of=I]          {$01$};
  \node[state] (K)          [left  of=J]          {$013$};
  \node[state] (L)          [below  of=K]          {$023$};
  \node[state] (M)          [left  of=L]          {$012$};
  \node[state] (N)          [above  of=M]          {$0123$};
  \node[state] (O)          [below of =L]          {$123$};

  \path (A) edge              	node {a} (B)
            edge [loop left]  	node {b} (C)
        (B) edge   				node {} (A)
            edge              	node {b} (C)
        (C) edge              	node {b} (D)
         	edge [loop right] 	node {a} (D)
        (D) edge  			  	node {b} (B)
        	edge  			  	node {a} (A)
        (E) edge  			  	node {b} (G)
        	edge  			  	node {a} (A)
        (F) edge  			  	node {b} (E)
        	 edge  		[bend left=70]	  	node {a} (I)
        (G) edge  			  	node {b} (F)
        	edge  		[bend left=10]	  	node {a} (I)
        (H) edge  			  	node {a,b} (E)
        (I) edge  			  	node {b} (H)
        	edge  		[bend left=10]	  	node {a} (G)
        (J) edge  	[loop right]		  	node {a} (B)
        	edge  			  	node {b} (I)
        (K) edge  			  	node {b} (M)
        	edge  			  	node {a} (J)
        (L) edge  			  	node {b} (K)
        	edge  	[bend left=10]		  	node {a} (M)
        (M) edge  [bend left=10]			  	node {b} (L)
        	edge [loop left]  	node {a} (C)
        (N) edge  			  	node {a} (M)
        	edge [loop left]  	node {b} (C)
        (O) edge  			  	node {a} (I)
        	edge [loop left]  	node {b} (C);
\end{tikzpicture}}
\end{center}
\caption{Power automaton of $\mathcal{P}_4$}
\label{EmptyStatePow}
\end{figure}
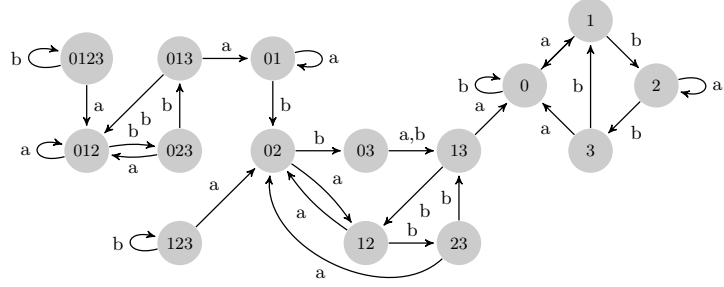

\section{$\Gamma_1$-graph of completely reachable automata}\label{Gamma}


It is noticed by Don in \cite{don2016vcerny} that if the graph $\Gamma_1(\mathcal{A})$ is cyclic, then the automaton is completely reachable. In \cite{bondar2016completely}, it is noticed that in fact, complete reachability holds under the weaker assumption of strong connectivity. However, the converse is not true: there are completely reachable automata for which $\Gamma_1(\mathcal{A})$ is not strongly connected. The second problem that we consider in this paper is the conjecture presented by Bondar and Volkov in \cite{bondar2016completely}:
\begin{conj}[Conjecture 3 in \cite{bondar2016completely}]
\label{Bondar}
If for every proper non-empty subset $P$ of the state set of a DFA $\mathcal{A}= (Q;\Sigma;\delta)$ there is a product $w$ of words of rank $|Q|-1$ with respect to $\mathcal{A}$ such that $P=Qw$, the graph $\Gamma_1(\mathcal{A})$ is strongly connected.
\end{conj}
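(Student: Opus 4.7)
The plan is to show that for every ordered pair of distinct states $p, q \in Q$, there is a directed path from $p$ to $q$ in $\Gamma_1(\mathcal{A})$. The hypothesis provides, for each proper non-empty subset of $Q$, a realization as $Qw$ with $w$ a product of rank-$(n-1)$ words, and I would try to read off $\Gamma_1$-walks from such realizations.

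First I would extract the chain structure of any realization. Fix a target $q$ and write $Q \setminus \{q\} = Qw$ with $w = w_1 \cdots w_k$ a concatenation of rank-$(n-1)$ words. Because $|Qw_1| = n - 1$ and the rank of a set cannot grow under $\delta$, every intermediate set $S_i = Qw_1 \cdots w_i$ has cardinality exactly $n - 1$; writing $S_i = Q \setminus \{p_i\}$ gives a chain $p_1, \ldots, p_k = q$. Since each $w_i$ has a unique duplicated pair $\{a_i, b_i\} = w_i^{-1}(dupl(w_i))$, the transition $S_{i-1} \to S_i$ forces $p_{i-1} \in \{a_i, b_i\}$, for otherwise $w_i$ would be injective on $S_{i-1}$ and the image would have size $n-2$; a direct check then gives $p_i = excl(w_i)$. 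In particular, each $w_i$ contributes the $\Gamma_1$-edge $(p_i, dupl(w_i))$.

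Next, I would try to glue these edges into $\Gamma_1$-walks. The naive concatenation fails: the edge arising from $w_i$ ends at $dupl(w_i)$, whereas the next edge begins at $p_{i+1}$, and the only available relation is $p_i \in w_{i+1}^{-1}(dupl(w_{i+1}))$. My proposed remedy is a \emph{bridging lemma}: for each $i$, construct a $\Gamma_1$-path from $dupl(w_i)$ to $p_{i+1}$ by re-applying the hypothesis to the $(n-1)$-set $Q \setminus \{dupl(w_i)\}$ and recursing the chain analysis. To control the starting vertex of the final walk, I would combine realizations of different targets: from a product realizing $Q \setminus \{p\}$ and one realizing $Q \setminus \{q\}$, splice together the induced walks, using the bridging lemma again at the junction, so as to obtain a path from $p$ to $q$.

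The main obstacle is clearly the bridging lemma. The hypothesis guarantees only set-level reachability and controls neither which rank-$(n-1)$ words realize a given subset nor any compatibility between the duplicated pairs of consecutive factors; without extra structure, there is no obvious mechanism linking $dupl(w_i)$ to $p_{i+1}$ inside $\Gamma_1$. I expect the delicate point of the whole argument to be either finding enough combinatorial rigidity on the duplicated pairs to close the loop, or else recognising that this step cannot be pushed through for every DFA satisfying the hypothesis and that the conjecture may require refinement.
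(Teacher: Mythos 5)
The statement you are trying to prove is false, and the paper's contribution on this point is a \emph{counterexample}, not a proof: the six-state, six-letter automaton $\mathcal{P}_4$ of Fig.~\ref{Lettersef} satisfies the hypothesis (every proper non-empty subset is reachable by a product of rank-$5$ letters) yet its $\Gamma_1$-graph is not strongly connected, because state $1$ never occurs as $dupl(w)$ for any word $w$ of rank $5$, so no edge of $\Gamma_1$ enters it (Proposition~\ref{P4}). Your own diagnosis of where the argument breaks is exactly right, and the obstruction is not repairable. Your chain analysis is sound for targets of size $n-1$: there the last factor $w_k$ must merge two states of $S_{k-1}$ and hence contributes the edge $(excl(w_k), dupl(w_k))$ into the target. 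But for smaller targets --- in particular singletons, which is where strong connectivity is really tested --- the last rank-$(n-1)$ factor can act \emph{injectively} on the preceding image set: it then moves the set without merging anything and contributes no $\Gamma_1$-edge into it. In $\mathcal{P}_4$ this is precisely how $\{1\}$ is reached: one first reaches $\{4\}$ and then applies the rank-$5$ letter $b$, which sends $4$ to $1$ bijectively. So the ``bridging lemma'' you would need is not merely hard to prove; it fails.

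The paper then salvages the statement by adding exactly the hypothesis your analysis implicitly relies on: Theorem~\ref{GonzeThm} assumes each $S$ has a realization $w$ whose final rank-$(|Q|-1)$ suffix $w_{end}$ strictly decreases the cardinality, i.e.\ $|Qw|<|Q(w/w_{end})|$. Under that assumption Lemma~\ref{LemmaIntersect} shows that $(excl(w_{end}), dupl(w_{end}))$ is an edge of $\Gamma_1$ entering $S$, and strong connectivity follows from the paper's reformulation of strong connectivity in terms of intersecting edges --- a cleaner route than building explicit $p$-to-$q$ paths. If you want to turn your attempt into something correct, the productive move is to exhibit a counterexample to the conjecture as stated and then prove the amended statement along these lines.
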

We provide an analysis of the way letters of rank $n-1$ combine with each other, and their influence on $\Gamma_1(\mathcal{A})$. This analysis allows to build a counterexample to Conjecture \ref{Bondar} and to prove a modified version of Conjecture~\ref{Bondar}.

In order to formalize the link between strong connectivity of the graph $\Gamma_1$ and subset reachability, we need the following definition of a strongly connected graph:

\begin{definition}
A directed graph $\Gamma=(N,E)$ with nodes $N$ and edges $E$ is strongly connected iff for any set $S\subset N$, there exists an edge $e=(n_1, n_2)\in E$ with $n_1 \notin S$ and $n_2\in S$. 
\end{definition}

We say that the edge $e=(n_1, n_2)\in E$ with $n_1 \notin S$ and $n_2\in S$ is \emph{intersecting} the set $S$.

This definition implies that iff $\Gamma_1$ is strongly connected, then any set $S\subset Q$ has an intersecting edge in $\Gamma_1$. This means that there exists a word $w\in\Sigma^*$ of rank $n-1$ with $dupl(w)\in S$ end $excl(w)\notin S$, and therefore that $w$ is extending the set $S$.




In order to analyze the effect of word concatenation, we cannot restrict ourselves to the concepts of $excl$ and $dupl$ of a word. Indeed, for two letters $a$ and $b$, the knowledge of $excl$ and $dupl$ is not enough to determine whether $ab$ or $ba$ are of rank $n-1$ or of rank $n-2$. To analyze this question, we need an additional concept: the \emph{roots} of a word of rank $n-1$. The roots are the states which are sent on the duplicated state. More formally, for a word $w$ of rank $n-1$, $root(w)=\{q_i|q_iw=dupl(w)\}$. We then have the following property:
\begin{lemma}
\label{key}
Let $A= (Q;\Sigma;\delta)$ be a DFA. Let $w_1, w_2 \in \Sigma^*$ be two words of rank $|Q|-1$. If $excl(w_1)\notin root(w_2)$, then $w_1w_2$ is of rank $n-2$. If $excl(w_1)\in root(w_2)$ then $w_1w_2$ is of rank $|Q|-1$. In that case, $excl(w_1w_2)=excl(w_2)$, $dupl(w_1w_2)=dupl(w_1)w_2$ and $root(w_1w_2)=root(w_1)$. 
\end{lemma}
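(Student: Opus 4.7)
The plan is to view $w_1 w_2$ as the function $w_2 \circ w_1$ acting on $Q$, and to track how each factor loses exactly one element from the image by collapsing its two-element root set onto a single duplicated state. Since $Q w_1 = Q \setminus \{excl(w_1)\}$ has size $n-1$, and $w_2$ is injective on $Q$ except on $root(w_2)$ (which it sends entirely to $dupl(w_2)$), the rank of $w_1 w_2$ is governed by whether both elements of $root(w_2)$ survive in $Q w_1$, i.e.\ by whether $excl(w_1)$ lies in $root(w_2)$.

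For the first implication, I would suppose $excl(w_1) \notin root(w_2)$. Then $root(w_2) \subseteq Q w_1$, so $w_2$ still collapses this pair inside $Q w_1$. Moreover the image of $excl(w_1)$ under $w_2$ is a single state different from $dupl(w_2)$ (by definition of $root$), so that state loses its only preimage once $excl(w_1)$ is removed from the domain. Comparing $Q w_1 w_2$ to $Q w_2 = Q \setminus \{excl(w_2)\}$, the former is missing both $excl(w_2)$ and $excl(w_1)w_2$, so it has size $n-2$ and $w_1 w_2$ has rank $n-2$.

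For the second implication, suppose $excl(w_1) \in root(w_2)$. Then exactly one of the two roots of $w_2$ has been removed from the domain, so $w_2$ restricted to $Q w_1$ is injective, giving $|Q w_1 w_2| = n-1$. Since $Q w_1 w_2 \subseteq Q w_2 = Q \setminus \{excl(w_2)\}$ and both sets have size $n-1$, they coincide, which yields $excl(w_1 w_2) = excl(w_2)$. To compute $root(w_1 w_2)$, I would observe that a pair of distinct states $q,q' \in Q$ is collapsed by $w_1 w_2$ iff either $w_1$ already collapses it (so $\{q,q'\} = root(w_1)$) or $w_1$ sends it injectively to $root(w_2)$. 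The latter is impossible here, because $excl(w_1) \in root(w_2)$ forces $Q w_1$ to contain at most one element of $root(w_2)$. Hence $root(w_1 w_2) = root(w_1)$, and consequently $dupl(w_1 w_2) = w_2(dupl(w_1)) = dupl(w_1)w_2$.

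The argument is essentially bookkeeping about cardinalities and preimages, with no real surprise. The one subtlety to get right is the claim in Case~2 that no \emph{second} collapsing pair is created by the composition; this is exactly where the hypothesis $excl(w_1) \in root(w_2)$ is used, forcing $Q w_1$ to contain only one root of $w_2$ and preventing any new collision under $w_2$. Apart from that, every identity follows by counting the image and comparing inclusions.
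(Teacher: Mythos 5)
Your proposal is correct and follows essentially the same route as the paper's own proof: in both cases the rank is determined by whether both elements of $root(w_2)$ survive in $Qw_1$, and the identities for $excl$, $dupl$ and $root$ follow by the same inclusion-plus-cardinality bookkeeping. If anything, your treatment of $root(w_1w_2)=root(w_1)$ is slightly more careful than the paper's, since you also rule out the creation of a new collapsing pair rather than only checking that $root(w_1)$ is still collapsed.
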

\begin{Long}
\begin{proof}
If $emp(w_1)\notin root(w_2)$, then $root(w_2)\subset Qw_1$, and the two states in $root(w_2)$ are synchronized by $w_1$ while the other states are permuted, so we have $|Qw_1w_2|=|Qw_1|-1=|Q|-2$.

Moreover, if $emp(w_1)\in root(w_2)$, then all the states in $Qw_1$ have a different image with $w_2$ and $|Qw_1w_2|=n-1$. In that case, we have the following properties.

The state $dupl(w_1w_2)$ is the image of $dupl(w_1)$ by word $w_2$, i.e. $dupl(w_1)w_2$.

The states $root(w_1w_2)$ are equal to $root(w_1)$, since for these states we have that $q_rw_1\in dupl(w_1)$ and therefore $q_rw_1w_2\in dupl(w_1)w_2$. 

The state $excl(w_1w_2)$ is $excl(w_2)$, since $excl(w_2)\notin Qw_2$ and $Qw_1w_2\subseteq Qw_2$.

\end{proof}
\end{Long}

This key lemma allows us to build the $\Gamma_1$-graph algorithmically.  Indeed, we can identify which words can be concatenated into other words of rank $n-1$, which induce edges in $\Gamma_1$. If two words combine into a word of length $n-2$, then the concatenation does not contribute to $\Gamma_1$.
 
We now define in Fig.~\ref{Lettersef} the automaton $\mathcal{P}_4$, which has six states and six letters of rank 5. Using Lemma \ref{key}, we will prove that it is a counterexample to Conjecture \ref{Bondar}.

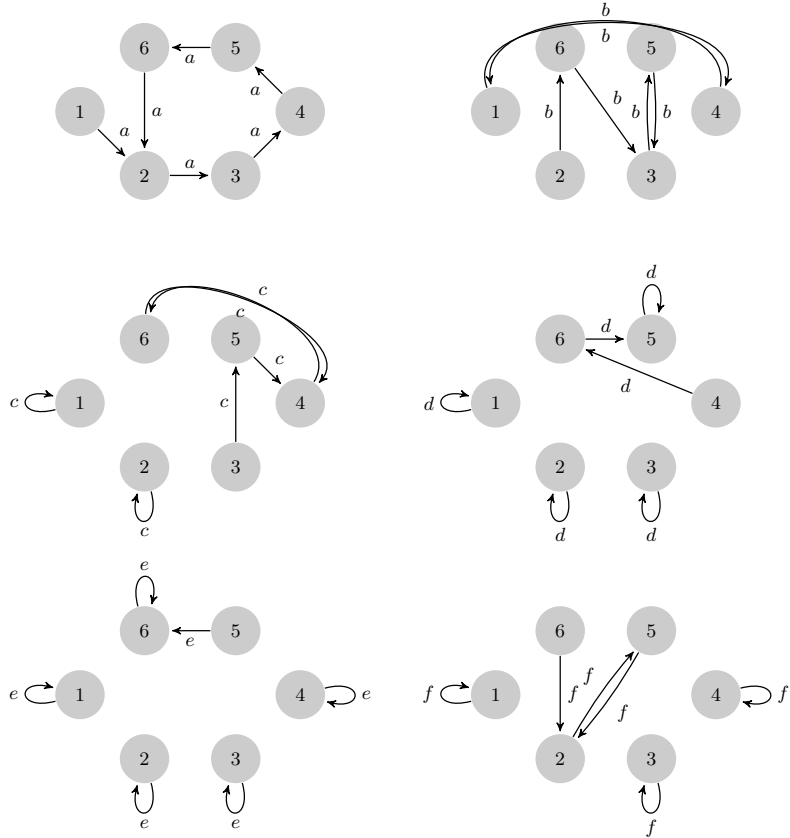
\begin{figure}[ht]
\begin{center}
\scalebox{0.80}{
\begin{tikzpicture}[->,>=stealth',shorten >=1pt,auto,node distance=1.5cm,
                    semithick]
  \tikzstyle{every state}=[fill=light-gray,draw=none,text=black, scale=1]

   \node[state] (A)                    {$1$};
  \node[state]         (B) [below right of=A] {$2$};
  \node[state]         (C) [ right of =B] {$3$};
  \node[state]         (D) [ above right of =C] {$4$};
  \node[state]         (E) [ above left of =D] {$5$};
  \node[state]         (F) [ left of =E] {$6$};
  
  \node[state] (A2)      [right=6cm of A]              {$1$};
  \node[state]         (B2) [below right of=A2] {$2$};
  \node[state]         (C2) [ right of =B2] {$3$};
  \node[state]         (D2) [ above right of =C2] {$4$};
  \node[state]         (E2) [ above left of =D2] {$5$};
  \node[state]         (F2) [left of =E2] {$6$};
  
  \node[state] (A3)      [below=4cm of A]              {$1$};
  \node[state]         (B3) [below right of=A3] {$2$};
  \node[state]         (C3) [ right of =B3] {$3$};
  \node[state]         (D3) [ above right of =C3] {$4$};
  \node[state]         (E3) [ above left of =D3] {$5$};
  \node[state]         (F3) [left of =E3] {$6$};
  
  \node[state] (A4)      [right=6cm of A3]              {$1$};
  \node[state]         (B4) [below right of=A4] {$2$};
  \node[state]         (C4) [ right of =B4] {$3$};
  \node[state]         (D4) [ above right of =C4] {$4$};
  \node[state]         (E4) [ above left of =D4] {$5$};
  \node[state]         (F4) [left of =E4] {$6$};
  
     \node[state] (A5)      [below=4cm of A3]                {$1$};
  \node[state]         (B5) [below right of=A5] {$2$};
  \node[state]         (C5) [ right of =B5] {$3$};
  \node[state]         (D5) [ above right of =C5] {$4$};
  \node[state]         (E5) [ above left of =D5] {$5$};
  \node[state]         (F5) [ left of =E5] {$6$};
  
  \node[state] (A6)      [right=6cm of A5]              {$1$};
  \node[state]         (B6) [below right of=A6] {$2$};
  \node[state]         (C6) [ right of =B6] {$3$};
  \node[state]         (D6) [ above right of =C6] {$4$};
  \node[state]         (E6) [ above left of =D6] {$5$};
  \node[state]         (F6) [left of =E6] {$6$};

  \path (A) edge             node  {$a$} (B)
        (B) edge              node  {$a$} (C)
        (C) edge           node  {$a$} (D)
        (D) edge            node  {$a$} (E)
        (E) edge             node  {$a$} (F)
        (F) edge            node  {$a$} (B)
        
        (A2) edge        [bend left=110]     node  {$b$} (D2)
        (B2) edge              node  {$b$} (F2)
        (C2) edge       [bend left=5]      node  {$b$} (E2)
        (D2) edge       [bend left=-100]      node  {$b$} (A2)
        (E2) edge       [bend left=5]     node  {$b$} (C2)
        (F2) edge             node  {$b$} (C2)
        
        (A3) edge        [loop left]     node  {$c$} (D3)
        (B3) edge       [loop below]       node  {$c$} (F3)
        (C3) edge             node  {$c$} (E3)
        (D3) edge       [bend left=-100]      node  {$c$} (F3)
        (E3) edge           node  {$c$} (D3)
        (F3) edge         [bend left=110]    node  {$c$} (D3)
        
        (A4) edge        [loop left]     node  {$d$} (D4)
        (B4) edge       [loop below]       node  {$d$} (F4)
        (C4) edge       [loop below]     node  {$d$} (E4)
        (D4) edge            node  {$d$} (F4)
        (E4) edge       [loop above]     node  {$d$} (E4)
        (F4) edge             node  {$d$} (E4)
        
        (A5) edge        [loop left]      node  {$e$} (B5)
        (B5) edge        [loop below]      node  {$e$} (B5)
        (C5) edge       [loop below]      node  {$e$} (B5)
        (D5) edge       [loop right]      node  {$e$} (B5)
        (E5) edge             node  {$e$} (F5)
        (F5) edge       [loop above]      node  {$e$} (B5)
        
        (A6) edge        [loop left]      node  {$f$} (B)
        (B6) edge        [bend left=5]      node  {$f$} (E6)
        (C6) edge       [loop below]      node  {$f$} (B)
        (D6) edge       [loop right]      node  {$f$} (B)
        (E6) edge       [bend left=5]     node  {$f$} (B6)
        (F6) edge             node  {$f$} (B6)
        ;
        ;

\end{tikzpicture}
}
\end{center}
\caption{The $6$ letters of the automaton $\mathcal{P}_4$}
\label{Lettersef}
\end{figure}

%


\begin{proposition}
The automaton $\mathcal{P}_4$ is completely reachable, and the graph $\Gamma_1(\mathcal{P}_4)$ is not strongly connected.
\label{P4}
\end{proposition}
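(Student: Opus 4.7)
The plan is to handle the two claims separately, both via direct analysis based on Lemma~\ref{key}. I would first compute, for each of the six letters of $\mathcal{P}_4$, the triple $(excl, dupl, root)$ by reading Fig.~\ref{Lettersef}. Two structural features drive everything: every letter has rank $n-1=5$, and $root(l)\subseteq\{1,5,6\}$ for every letter, with $root(a)=\{1,6\}$ and $root(b)=\cdots=root(f)=\{5,6\}$.

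Complete reachability is the easier claim. Since every letter already has rank $n-1$, every word in $\Sigma^*$ is automatically a product of rank-$(n-1)$ words, so the hypothesis of Conjecture~\ref{Bondar} reduces to the usual notion of complete reachability. With $|Q|=6$ the power automaton has only $64$ vertices, and a breadth-first search from $Q$ is a feasible finite certificate. I would complement this computation by exhibiting explicit extending words, using the $a$-cycle on $\{2,3,4,5,6\}$ to move among $5$-subsets and the compressive effects of $c,d,e,f$ (composed after $a$-cycling) to reach representative subsets of each cardinality from $5$ down to $1$.

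For the non-strong-connectivity of $\Gamma_1(\mathcal{P}_4)$, the idea is to show that vertex $1$ has no incoming edge. Using Lemma~\ref{key}, a concatenation $w=l_1\cdots l_k$ has rank $n-1$ iff $excl(l_i)\in root(l_{i+1})$ for every $i<k$; equivalently, only $a,e,f$ can appear as non-final letters (their $excl$ values $1,5,6$ being the only ones belonging to some root), while $b,c,d$ may occur only once, at the end. I would then enumerate the finite set of reachable pairs $(dupl(w),excl(w))$ by a short BFS starting from the six single-letter words, and observe that $dupl(w)$ never equals $1$: one checks that when $excl(w)\in\{1,5,6\}$, the value $dupl(w)$ lies in a small fixed set whose image under every admissible next letter avoids $1$. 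Consequently $\Gamma_1$ has no edge ending in $1$ and cannot be strongly connected.

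The main obstacle is the enumeration in the last step: the $a$-branch (closed under the letter $a$ alone) and the $\{e,f\}$-branch are linked through the letter $f$ (whose $excl=6$ lies in both admissible roots), so one must verify that no composed trajectory ever places $1$ in the $dupl$ slot. Organising the verification as a finite transition system on the space of $(dupl,excl)$ pairs, and listing the reachable pairs explicitly, gives a clean finite certificate for both the absence of incoming edges at vertex~$1$ and, if desired, the complete description of $\Gamma_1(\mathcal{P}_4)$.
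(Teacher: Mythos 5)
Your proposal is correct, and for the heart of the matter --- showing $\Gamma_1(\mathcal{P}_4)$ is not strongly connected --- it follows essentially the same route as the paper: use Lemma~\ref{key} to observe that $excl(w)$ of any rank-$5$ word is the $excl$ of its last letter while every root lies in $\{1,5,6\}$, so only $a$, $e$, $f$ can occur in non-final position; then enumerate the reachable $(excl,dupl)$ pairs (the paper records exactly this as the families $a^*$, $\{e,f\}^*$, $\{e,f\}^*fa^*$, $\{e,f\}^*l$ in Fig.~\ref{TableWord}) and observe that $1$ never occurs as a duplicated state, so vertex $1$ has no incoming edge. Your remark about the $a$-branch and the $\{e,f\}$-branch being linked only through $f$ is precisely the $\{e,f\}^*fa^*$ family. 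Where you diverge is the complete-reachability half: you propose a breadth-first search of the $64$-vertex power automaton, which is a legitimate finite certificate but hides the structure, whereas the paper extracts it directly from the $\Gamma_1$-analysis already performed --- the single-letter edges $(excl(l),dupl(l))$ form the cycle $2\to3\to4\to5\to6\to2$ plus the edge $1\to2$, so every subset except $\{1\}$ has an intersecting edge, is therefore extendable, and hence reachable by induction; the one set $\{1\}$ with no incoming $\Gamma_1$-edge is then reached separately as $\{4\}b=\{1\}$. The paper's version buys a cleaner explanation of \emph{why} the automaton is completely reachable even though $\Gamma_1$ fails to be strongly connected (the failure is concentrated at $\{1\}$, which is reachable but not extendable by rank-$(n-1)$ words); your BFS buys mechanical checkability. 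Do note that your ``explicit extending words'' sketch, taken alone, would stall exactly at $\{1\}$, so the BFS (or the paper's ad hoc step for $\{1\}$) is genuinely needed there.
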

\begin{Long}
\begin{proof}
Figure~\ref{TableWord} lists all the words of rank 5 of this automaton, with their duplicated states, excluded states and root states. In the first six lines are listed the single letters. We notice that, due to Lemma \ref{key}, the only words or rank $n-1$ are individual letters, $a^*$, $\{e,f\}^*$, $\{f,g\}^*fa^*$ and $\{f,g\}^*l$ with $l$ being any of the letters. Indeed, the roots of any letter are states $1$, $5$ and $6$, and the only letters with $1$, $5$ or $6$ as excluded states are $e$ and $f$. Therefore, in order to keep words of rank 5, letter $a$ can only be preceded by letters $a$ or $f$, and all the other letters, including $e$ and $f$, can only be preceded by letters $e$ or $f$, which leads to the six last lines of the table. The duplicated states are found by applying the combination rule of Lemma~\ref{key}.


The edges of the graph $\Gamma_1$ are defined by the empty states and double states of words in Fig.\ref{TableWord}, i.e. the second and third column. Since state 1 does not appear in the duplicated state column, there are no edge of $\Gamma_1$ directed toward it, and therefore the graph $\Gamma_1$ is not strongly connected.

We notice however that any set is reachable. The edges of $\Gamma_1$ obtained with single letters form a cycle with states 2, 3, 4, 5 and 6, plus an edge from 1 to 2. In this setting, any subset of $Q$ except for set $\{1\}$ has an intersecting edge. Therefore, all sets of states except for $\{1\}$ are extendable, which implies by induction on the number of states that they are also reachable. The set $\{1\}$ itself can also be reached from set $\{4\}$ by applying letter $b$. Therefore, the automaton is completely reachable and has a $\Gamma_1$ graph which is not strongly connected. So it is a counterexample to Conjecture \ref{Bondar}.
\end{proof}
\end{Long}

\begin{figure}[ht]
\begin{center}
\begin{tabular}{|c|c|c|c|}
   \hline
   word & excluded state & duplicated state & root states \\
   \hline
   $a$ & 1 & 2 & 1, 6 \\
   $b$ & 2 & 3 & 5, 6 \\
   $c$ & 3 & 4 & 5, 6 \\
   $d$ & 4 & 5 & 5, 6 \\
   $e$ & 5 & 6 & 5, 6 \\
   $f$ & 6 & 2 & 5, 6 \\
   $a^*$ & 1 & 2, 3, 4, 5 or 6 & 1, 6 \\
   $\{e,f\}^*$ & 5 or 6 & 2, 5 or 6 & 5, 6 \\
   $\{e,f\}^*fa^*$ & 1 & 2, 3, 4, 5 or 6  & 5, 6 \\
   $\{e,f\}^*b$ & 2 & 2b=6, 5b=3 or 6b=3 & 5, 6 \\
   $\{e,f\}^*c$ & 3 & 2c=2, 5c=4 or 6c=4 & 5, 6 \\
   $\{e,f\}^*d$ & 4 & 2d=2, 5d=5 or 6d=5 & 5, 6 \\
   \hline
\end{tabular}
\caption{Exhaustive list of words of rank 5}
\label{TableWord}
\end{center}
\end{figure}

Conjecture \ref{Bondar} turns out to be false. However, we notice that, to reach the set $\{1\}$, the last letter we had to use was of rank $n-1$ while permuting the states of the set. This leads to the following observation: if a set $S$ is such that $Qw=S$, and such that $w=w_1w_2$, with $w_2$ a word of rank $n-1$, two possible cases arise. Either $|Qw_1|=|Qw_1w_2|$, or $|Qw_1|=|Qw|-1$. In the first case, the word $w_2$ acts as a permutation on the set $Qw_1$. In the latter case, $w_2$ has synchronized two states, and we observe the following:

\begin{lemma}
\label{LemmaIntersect}
Let $\mathcal{A}=\{Q, \Sigma, \delta\}$ be a DFA. For $S\subset Q$, if we have $Qw=S$, with $w=w_1w_2$, $w_2$ of rank $|Q|-1$ and $|Qw_1|=|Qw|+1$, then there is an edge intersecting $S$ in $\Gamma_1$.
\end{lemma}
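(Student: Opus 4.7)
The plan is to exhibit an explicit edge of $\Gamma_1$ that intersects $S$, namely the edge associated with $w_2$ itself. Since $w_2$ has rank $|Q|-1$, it determines a triple $(excl(w_2), dupl(w_2), root(w_2))$, and the pair $(excl(w_2), dupl(w_2))$ is by definition an edge of $\Gamma_1$. It therefore suffices to check that $dupl(w_2) \in S$ and $excl(w_2) \notin S$.

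For the first point, I would use the hypothesis $|Qw_1| = |Qw_1 w_2|+1$. In a DFA, applying a word of rank $|Q|-1$ to a subset $T\subseteq Q$ collapses the cardinality by one if and only if $T$ contains both elements of $root(w_2)$; otherwise $w_2$ acts injectively on $T$. Hence $root(w_2) \subseteq Qw_1$. Picking any $r \in root(w_2)$, we get $dupl(w_2) = r\cdot w_2 \in Qw_1 w_2 = Qw = S$, which establishes $dupl(w_2) \in S$.

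For the second point, note that $excl(w_2) \notin Qw_2$ by definition. Since $Qw_1 w_2 \subseteq Qw_2$, we immediately obtain $excl(w_2) \notin Qw_1 w_2 = S$.

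Combining the two, the edge $(excl(w_2), dupl(w_2))$ of $\Gamma_1$ is intersecting $S$ in the sense defined earlier. The only mildly delicate step is the first, which relies on the rank-reduction dichotomy for words of rank $|Q|-1$; this is a direct consequence of Lemma~\ref{key}'s viewpoint on roots, so no serious obstacle is expected.
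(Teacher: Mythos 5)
Your proposal is correct and follows essentially the same route as the paper's own proof: both exhibit the edge $(excl(w_2), dupl(w_2))$, deduce $root(w_2)\subseteq Qw_1$ from the cardinality drop to conclude $dupl(w_2)\in S$, and use $S\subseteq Qw_2$ to conclude $excl(w_2)\notin S$. No gaps.
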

\begin{Long}
\begin{proof}
The edge $(excl(w_2), dupl(w_2))$ is an edge of $\Gamma_1$ and intersects $S$.

First, $excl(w_2)\notin S$ because $S= Qw_1w_2\subseteq  Qw_2$, and $excl(w_2)\notin Qw_2$.

Second, we have $root(w_2)\subset Qw_1$, because otherwise $|Qw_1|=|Qw_1w_2|=|Qw|$, which contradicts the assumption that $|Qw_1|=|Qw|+1$. As $dupl(w_2)$ is the image of $root(w_2)$ by $w_2$, it implies that $dupl(w_2)$ is in $Qw_1w_2=S$. 

\end{proof}
\end{Long}

Based on this observation, we propose the following modified version of Conjecture \ref{Bondar}:

\begin{theorem}

\label{GonzeThm}
Let $\mathcal{A}=\{Q, \Sigma, \delta\}$ be a DFA. If for every proper non-empty subset $S\subset Q$ there is a product $w\in \Sigma^*$ such that $S=Qw$ and such that $w$ has a suffix $w_{end}$ of rank $|Q|-1$ with $|Qw|<|Q(w/w_{end})|$, then the graph $\Gamma_1(\mathcal{A})$ is strongly connected.
\end{theorem}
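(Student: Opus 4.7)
The plan is to derive the conclusion almost directly from Lemma~\ref{LemmaIntersect}, using the characterization of strong connectivity recalled just before it: $\Gamma_1(\mathcal{A})$ is strongly connected iff every proper non-empty $S \subset Q$ admits an intersecting edge. So I would fix an arbitrary such $S$ and look for an edge of $\Gamma_1$ entering $S$ from outside.

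By hypothesis, there exists $w \in \Sigma^*$ with $Qw = S$, a suffix $w_{end}$ of rank $|Q|-1$, and the strict inequality $|Qw| < |Qw_1|$ where $w_1 = w/w_{end}$. The key small observation is that a word of rank $|Q|-1$ can decrease the cardinality of any image by at most one: $w_{end}$ identifies exactly one pair of states of $Q$, so for every $T \subseteq Q$ one has $|Tw_{end}| \in \{|T|, |T|-1\}$. Combined with the strict inequality this forces $|Qw_1| = |Qw| + 1$, which is precisely the hypothesis required to invoke Lemma~\ref{LemmaIntersect}.

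Applying Lemma~\ref{LemmaIntersect} to the decomposition $w = w_1 w_{end}$ then yields the edge $(excl(w_{end}), dupl(w_{end}))$ of $\Gamma_1$, which the lemma shows to intersect $S$. Since $S$ was arbitrary, $\Gamma_1(\mathcal{A})$ satisfies the definition of strong connectivity. I do not expect a significant obstacle here: the statement has been calibrated so that the cardinality-drop condition $|Qw| < |Qw_1|$ is exactly what makes Lemma~\ref{LemmaIntersect} applicable, and the proof therefore reduces to the short cardinality check outlined above, with the bulk of the technical content having been absorbed into Lemma~\ref{key} and Lemma~\ref{LemmaIntersect}.
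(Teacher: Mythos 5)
Your proof is correct and follows essentially the same route as the paper, namely a direct application of Lemma~\ref{LemmaIntersect} to each proper non-empty subset $S$ combined with the characterization of strong connectivity via intersecting edges. You even make explicit a small step the paper leaves implicit: that a suffix of rank $|Q|-1$ can lower the cardinality of the image by at most one, so the hypothesis $|Qw|<|Q(w/w_{end})|$ forces $|Q(w/w_{end})|=|Qw|+1$, which is exactly the form required by the lemma.
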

\begin{Long}
\begin{proof}
It is a direct application of Lemma~\ref{LemmaIntersect} on any set of states $S\subset Q$ of the automaton. Indeed, it implies that any set has an incoming edge in $\Gamma_1$, which is the definition of a strongly connected graph.
\end{proof}
\end{Long}

In particular, this theorem applies for automata with a simple idempotent, as studied in \cite{rystsov2000estimation}, and if the automaton generates the full transformation group, as studied in \cite{GonzeGGJV17}.

\section{Conclusion}



In this article, we focused on subset reachability in synchronizing automata. The first problem considered was the conjecture presented by Henk Don in \cite{don2016vcerny}.
We started by presenting a family of automata built from a set of permutation with a large square graph diameter which are counterexamples to Conjecture~\ref{Don}.
Then, we analyzed modified versions of Conjecture~\ref{Don} and natural questions which arise from it. In particular, we built a family of strongly connected synchronizing automata with subsets which cannot be reached with words shorter than a function $\Omega(exp(n))$.

The second problem that we considered is Conjecture~\ref{Bondar}, presented by Bondar and Volkov in \cite{bondar2016completely}.
We provided an analysis of the way letters of rank $n-1$ combine with each other, and their influence on the $\Gamma_1-$graph. This analysis allowed to build a counterexample to Conjecture \ref{Bondar} and to prove a modified version of Conjecture~\ref{Bondar}.

Finally, we propose the following problem, which is the restriction of Conjecture~\ref{Don} to completely reachable automata.

\begin{problem}
\label{CRArestriction}
Let $A= (Q;\Sigma;\delta)$ be an $n-$state completely reachable automaton. For any $0<k<n$ and any set $S\in Q$ of size $k$, is it true that there exists a word $w$ such that $Qw=S$ and $|w|\leq n(n-k)$?
\end{problem}
A positive answer to this problem would prove Cerny's conjecture for completely reachable automata.

\section*{Acknowledgements}

The authors would like to thank Balazs Gerencser and Vladimir Gusev for fruitful discussions and advice, and Elodie Boucquey, Myriam Gonze and Xavier Gonze for careful reading of the paper.

\bibliography{references}

\end{document}